\documentclass[11pt,a4paper]{article}
\usepackage[T1]{fontenc}
\usepackage{amsmath,amssymb,amsthm,pxfonts}
\usepackage[margin=1in]{geometry}
\usepackage{microtype}
\usepackage[driverfallback=dvipdfmx,colorlinks=true,linkcolor=blue,citecolor=blue,urlcolor=blue]{hyperref}
\hypersetup{pdftitle={Fundamental Theorem of Discrete Asset Pricing},
  pdfauthor={Masaaki Fukasawa},
  pdfsubject={Continuous-time multidimensional asset pricing on a locally finite state space}}
\newtheorem{theorem}{Theorem}
\newtheorem{lemma}[theorem]{Lemma}
\theoremstyle{remark}
\newtheorem*{remark}{Remark}
\newcommand{\Pp}{\mathbb P}
\newcommand{\Qq}{\mathbb Q}
\newcommand{\E}{\mathbb E}
\newcommand{\F}{\mathcal F}
\newcommand{\one}{\mathbf 1}
\newcommand{\var}{\operatorname{Var}}
\newcommand{\conv}{\operatorname{conv}}
\newcommand{\as}{\text{a.s.}}

\title{Fundamental Theorem of Discrete Asset Pricing}
\author{Masaaki Fukasawa\thanks{The University of Osaka,
Email: fukasawa@sigmath.es.osaka-u.ac.jp
}}
\date{}

\begin{document}
\maketitle

\begin{abstract}
Motivated by discrete price quotes in continuous time, we seek a fundamental
theorem based on no arbitrage rather than a condition on limiting trading
opportunities. We consider a vector of discounted prices taking values in a locally finite
subset of $\mathbb R^d$ in continuous time. For a locally bounded c\`adl\`ag price, we prove that no arbitrage, no free
lunch with vanishing risk, no free lunch, and the existence of an equivalent
local martingale measure are equivalent. An elementary finite-dimensional
reduction of portfolios yields a common pathwise bound on admissible wealth.
This bound reduces closedness to convex averaging in a
weighted Hilbert space, after which the Kreps--Yan theorem applies. Trading
gains are finite Stieltjes sums, and the proof uses no It\^o integration.
\end{abstract}

\section{Introduction}

The fundamental theorem of asset pricing (FTAP) relates the absence of
arbitrage to the existence of an equivalent martingale measure. The
martingale approach was developed by Harrison and Kreps \cite{HK} and
Harrison and Pliska \cite{HP}. For finitely many deterministic trading dates,
Dalang, Morton and Willinger \cite{DMW} proved the equivalence between
no arbitrage (NA) and an equivalent martingale measure for vector prices.
Hilbert-space and utility-based proofs were given by Schachermayer \cite{Sch}
and Rogers \cite{Rogers}; see also Jacod and Shiryaev \cite{JS}.

NA expresses a simple economic principle: a trader should not be able to
start with zero capital and end with a nonnegative gain that is positive
with positive probability. Admissibility requires a fixed lower bound on
gains throughout the horizon, excluding unlimited borrowing through doubling
strategies. In general continuous-time models, however, NA alone is
insufficient for an FTAP. Stronger conditions also rule out opportunities
that appear as limits of trading strategies.

Kreps \cite{Kreps} introduced no free lunch (NFL), which uses the weak-star
closure of attainable dominated claims. No free lunch with vanishing risk
(NFLVR) instead uses norm closure. Delbaen and Schachermayer \cite{DS}
proved the FTAP with NFLVR for locally bounded semimartingales; weak-star
closedness is a central step, also discussed by Kabanov \cite{K}.
NFLVR is economically more transparent than NFL: the possible loss tends
to zero uniformly as the payoffs approach a nonzero nonnegative limit.
Nevertheless, it still concerns a limiting sequence rather than one
executable arbitrage. Can a natural assumption on prices recover an FTAP
based on ordinary NA while retaining continuous time?

Our motivation is that price quotes and bank-account interest change in
discrete increments, while trading and information arrival take place in
continuous time. We therefore assume that discounted prices take values in
a locally finite set: each bounded region contains only finitely many price
vectors. For locally bounded c\`adl\`ag prices with this property, we prove
that NA, NFLVR, NFL, and the existence of an equivalent local martingale
measure are equivalent. Neither the trading times nor the number of price
changes is fixed in advance.

As in Schachermayer \cite{Sch}, we use orthogonal projection and
Hilbert-space compactness. His setting has finitely many deterministic
trading dates. Here jumps occur at stopping times, their number has no
deterministic bound, and information may arrive between jumps. Holdings
at a jump can therefore use information unavailable at the preceding jump;
sampling prices at jump times does not reduce our strategy class to a
finite-period model. A common pathwise wealth bound handles this difficulty.

Cuchiero, Klein and Teichmann \cite{CKT} study discrete or imperfect price
observations through two filtrations, using an $L^p$ asymptotic no-free-lunch
condition and obtaining martingale optional projections onto the trading
filtration. We instead restrict traded price values, use one filtration and
lower-bound admissibility, and obtain a local martingale measure for the
price process itself from ordinary NA. The resulting gains are finite
Stieltjes sums, requiring no It\^o integration. The main idea is to remove
redundant portfolio positions by finite-dimensional geometry, then use NA
to obtain a bound common to all admissible wealth processes. This reduces
the closedness argument to convex averaging in a weighted Hilbert space.
The geometry is isolated in a separate lemma, and the averaging and
separation arguments are supplied in the appendix.

\section{Model and main theorem}

Fix $T<\infty$ and a filtered probability space
$(\Omega,\F,(\F_t)_{0\le t\le T},\Pp)$ satisfying the usual conditions, with
$\F=\F_T$. The bank account is the num\'eraire and has discounted value one.
Fix $d<\infty$ and a deterministic set $E\subset\mathbb R^d$ such that
$E\cap\{x:|x|\le b\}$ is finite for every $b<\infty$, where $|\cdot|$ is
the Euclidean norm. Let $S$ be an adapted c\`adl\`ag discounted price vector with deterministic
$S_0$, taking values in $E$ outside one null set. Assume local
boundedness: there are stopping times $\sigma_m\uparrow T$, eventually equal
to $T$ almost surely, for which each stopped process $S^{\sigma_m}$ is bounded
by a deterministic constant $b_m$.

Every c\`adl\`ag path is bounded on $[0,T]$, so its values lie in a finite
subset of $E$. Distinct values on that path have a positive minimum
separation. Jump times cannot accumulate, by left limits and right
continuity, and the path is constant between jumps. Every predictable $\mathbb R^d$-valued
$H$ consequently has the Stieltjes gain
\begin{equation}\label{gain}
 X_t=(H\cdot S)_t=\sum_{0<u\le t}\langle H_u,\Delta S_u\rangle.
\end{equation}
The successive price-jump times can be enumerated as stopping times.
Predictable holdings evaluated at these times are measurable at the
corresponding stopping times, so the finite sum \eqref{gain} defines an
adapted c\`adl\`ag gain process.
A gain is $a$-admissible if $X_t\ge-a$ for all $t$, outside one null set,
where $a>0$, and admissible if it is $a$-admissible for some $a>0$. Define
\[
 C=\{f\in L^\infty:f\le X_T\text{ for some admissible gain }X\}.
\]
The set $C$ is a convex cone containing $-L^\infty_+$: gains add and
scale with the strategies, their lower bounds add and scale accordingly,
and the zero strategy dominates every nonpositive claim.
All random variables are identified up to almost-sure equality. Define
\[
 \begin{gathered}
 \mathrm{NA}:\quad C\cap L^\infty_+=\{0\},\qquad
 \mathrm{NFLVR}:\quad \overline C^{\|\cdot\|_\infty}\cap L^\infty_+=\{0\},\qquad
 \mathrm{NFL}:\quad \overline C^{\sigma(L^\infty,L^1)}\cap L^\infty_+=\{0\}.
 \end{gathered}
\]
NFL uses the full weak-star closure, so NFL $\Rightarrow$ NFLVR $\Rightarrow$
NA. By truncation from above, NA is equivalent to the absence of a nonzero
nonnegative admissible terminal gain.

\begin{theorem}\label{main}
The following conditions are equivalent:
\[
 \mathrm{NA},\quad \mathrm{NFLVR},\quad \mathrm{NFL},\quad
 \exists\,\Qq\sim\Pp\text{ under which }S\text{ is a local martingale}.
\]
Under NA, $C$ is $\sigma(L^\infty,L^1)$-closed. If $S$ is bounded, the
martingale measure can be taken to make $S$ a true martingale.
\end{theorem}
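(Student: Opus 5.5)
The chain NFL $\Rightarrow$ NFLVR $\Rightarrow$ NA is already noted. For the measure direction, suppose $\Qq\sim\Pp$ makes $S$ a local martingale. Every admissible gain $X=H\cdot S$ is then a local martingale, since it is a stochastic integral of a predictable integrand against a local martingale; here it is a finite Stieltjes sum, and the localization can be done directly along the jump times. Being bounded below, $X$ is a $\Qq$-supermartingale, so $\E_\Qq[f]\le \E_\Qq[X_T]\le 0$ for every $f\in C$. The functional $f\mapsto\E_\Qq f$ is weak-star continuous because $d\Qq/d\Pp\in L^1$. Hence the inequality $\E_\Qq f\le 0$ extends to $\overline C^{\sigma(L^\infty,L^1)}$, and since $\Qq\sim\Pp$ this gives NFL. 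So the substance is NA $\Rightarrow$ ($C$ weak-star closed) $\Rightarrow$ existence of $\Qq$. The last implication is Kreps--Yan. If $C$ is weak-star closed, contains $-L^\infty_+$ and satisfies $C\cap L^\infty_+=\{0\}$, there is $\Qq\sim\Pp$ with $\E_\Qq f\le 0$ on $C$. Applying this to $\pm$ bounded stopped gains $\pm(\one_{A\times(s,t]}\one_{[0,\sigma_m]})\cdot S$, which are admissible because $S^{\sigma_m}$ is bounded, gives $\E_\Qq[\one_A(S^{\sigma_m}_t-S^{\sigma_m}_s)]=0$. So $S$ is a $\Qq$-local martingale, and a true martingale when $S$ is bounded, since then $\sigma_m$ can be taken equal to $T$.

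For closedness I would use the Krein--Šmulian theorem. It reduces weak-star closedness to showing that $C\cap\{\|f\|_\infty\le1\}$ is closed under a.s. convergence of bounded sequences. Take $f_n\le X^n_T$ with $|f_n|\le 1$ and $f_n\to f$ a.s. Since $f_n\ge-1$ we have $X^n_T\ge -1$, and I first want a uniform lower bound on the whole path. Under NA, any gain with $X_T\ge -1$ should satisfy $X_t\ge -1$ for all $t$, or else one stops when $X_t<-1-\epsilon$ and holds nothing afterwards. That step uses the specific structure here: on the event where the wealth never recovers, one exploits the local finiteness to produce a genuine nonnegative nonzero gain. I would argue this along the stopping time $\tau=\inf\{t:X_t<-1-\epsilon\}$. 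Because jumps are discrete, $X_\tau$ is attained exactly, and the claim $X_T-X_\tau\ge \epsilon$ on $\{\tau\le T\}$ yields an arbitrage $\one_{(\tau,T]}H\cdot S$, which is admissible because its lower bound is controlled by $a+1+\epsilon$. So all the $X^n$ are $1$-admissible.

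The main step, which I expect to be the real obstacle, is an \emph{upper} pathwise bound, or equivalently a bound on the holdings. I would reduce holdings at each jump time modulo the directions that cannot move. At a jump time $\tau_k$ on the current finite local configuration, the possible next jumps $\Delta S$ range over a finite set $D\subset E-S_{\tau_k-}$ of vectors. Only the projection of $H$ onto $\mathrm{span}(D_{\text{possible}})$ matters, where ``possible'' means the conditional support given $\F_{\tau_k-}$-type information. NA implies that $0$ lies in the relative interior of the convex hull of the conditionally possible jumps, as in the Dalang--Morton--Willinger argument. Finite-dimensional geometry then gives a constant $c(x,\text{config})>0$ such that every $h$ in that span has some possible jump $y$ with $\langle h,y\rangle\le -c|h|$. 1-admissibility then forces $|H^{\text{proj}}|\le (1+X_{\tau-})/c$. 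Iterating across jumps on a path restricted to the ball of radius $b$ gives a pathwise bound $|X_t|\le K(\omega)$ common to all 1-admissible gains, where $K$ depends on the path only through its finite set of values and the number of jumps. This is the "common pathwise bound" promised in the abstract, and making the relative-interior property hold with random, information-dependent supports between jumps is where I expect the technical difficulty.

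Given this bound, I would pass to the weighted space $L^2(\Pp')$ with $d\Pp'\propto(1+K)^{-2}d\Pp$. There $X^n_T$ is bounded in $L^2$, so forward convex combinations $\tilde X^n_T\in\conv(X^n_T,X^{n+1}_T,\dots)$ converge in $L^2(\Pp')$ and a.s. to some $g\ge f$. One needs $g$ to be attainable. Convex combinations of gains are gains, with the combined holdings and projected holdings also bounded. Applying the Hilbert-space (Komlós-type) averaging at each jump time to the reduced holdings $H^{\text{proj}}_{\tau_k}$ gives limits $\hat H_{\tau_k}$, measurable at the corresponding stopping times. Since there are almost surely finitely many jumps, the gains converge pathwise to $\hat H\cdot S$, which is $1$-admissible and dominates $f$. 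Thus $f\in C$, so $C$ is closed and the theorem follows.
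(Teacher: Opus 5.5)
\textbf{Verdict: genuine gap.} Your architecture matches the paper's: the continuation argument for the lower bound, a common pathwise upper bound from finite-dimensional geometry, Koml\'os averaging in a weighted Hilbert space, Krein--\v Smulian, Kreps--Yan, and stopped elementary strategies. But the step you flag as the obstacle is the heart of the theorem, and you leave it open.

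Your ``conditionally possible jumps given $\F_{\tau_k-}$-type information'' are not pinned down. Conditioning on information available at the previous jump does not work, because the holding used at $\tau_k$ may depend on information arriving strictly between jumps. To make your route work you would have to:
\begin{itemize}
\item condition the jump at $\tau_k$ on $\F_{\tau_k-}$ itself;
\item realize each $\F_{\tau_k-}$-measurable test vector as the value at $\tau_k$ of a predictable process cut off by $\one_{(\tau_{k-1},\tau_k]}$, so that NA can be applied to it;
\item use that on $(\sigma_{m-1},\sigma_m]$ the support is one of finitely many subsets of $F_m=(E_m-E_m)\setminus\{0\}$, so that the compactness constant is uniform.
\end{itemize}
None of this is done. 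The paper avoids conditional supports entirely. The strategy itself selects the predictable set $J_t=\{z\in F_m:\langle H_t,z\rangle\ge-W_{t-}\}$, and admissibility forces $\Delta S_t\in J_t$. The single auxiliary strategy $U_t=u_{J_t}$ from Lemma~\ref{geometry} has a nondecreasing gain, so NA forces every actual increment into $J_t^0$, where $|P_{J_t}H_t|\le\kappa_{J_t}W_{t-}$. This yields $G$ with $G\cdot S=H\cdot S$, $|G_t|\le c_tW_{t-}$ and $W\le aR$.

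The closing step, as written, also does not produce an attainable dominating claim. Averaging the terminal values $X^n_T$ in $L^2(\Pp')$ gives a limit with no strategy attached. Applying Koml\'os separately at each jump time gives different convex weights at different $\tau_k$. The resulting holdings then neither reproduce $\lim\widetilde X^n$ nor inherit the bound $-1$; convex combination preserves admissibility only when one set of weights is used at all times.

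You need a single forward convex combination for all jumps at once. The paper averages the reduced holdings $G^n$ in the one space $L^2(\mu)$, with $\mu(A)=\E[R_T^{-3}\int_0^T\one_Ac^{-2}\,dV]$ on the predictable $\sigma$-field. Then the a.e.\ limit is a predictable process, $\mu$-null sets carry no jumps, and $\widetilde H^n\cdot S\to H\cdot S$ uniformly a.s. A diagonal argument over $k$ would also serve.

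Finally, in the converse direction your principle ``a predictable Stieltjes integral against a local martingale, localized along jump times, is a local martingale'' is false even in this model. \'Emery's example ($S$ jumps by $\pm1$ at an independent exponential time, $H_t=1/t$) has one jump and values in $\{-1,0,1\}$, and stopping at jump times cannot make $H$ at the jump integrable. Your conclusion $\E_\Qq X_T\le0$ for admissible $X$ is true, but the proof must use the lower bound. Either cite Ansel--Stricker, or use the paper's truncation $H\one_{\{|H|\le n\}}\one_{[0,\rho_n]}$, which stays above $-a$, followed by Fatou.
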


\section{A common pathwise bound}

We first record a consequence of NA used throughout the proof. If an
admissible gain $X=H\cdot S$ satisfies $X_T\ge-a$, then $X\ge-a$ on the whole
interval. Indeed, suppose $X\ge-b$ and $A=\{X_s<-a\}\in\F_s$ has positive
probability for some deterministic $s<T$. Continuing the strategy only on
$A$ after $s$ gives the gain $\one_A(X_t-X_s)$ for $t\ge s$. It is bounded
below by $-b$ and has a strictly positive terminal value on $A$, and zero on
$A^c$, contradicting NA. Rational times and right continuity prove the claim
simultaneously at all times; this is the standard continuation argument
\cite[Lemma 2.1]{K}.

\subsection{The deterministic geometry}

Vector portfolios can contain large redundant positions: for $S=(Y,Y)$,
$(n,-n)$ has zero gain. The following deterministic lemma bounds a
replacement portfolio with the same relevant inner products.

\begin{lemma}[Finite-dimensional portfolio geometry]\label{geometry}
Let $J\subset\mathbb R^d\setminus\{0\}$ be finite, possibly empty.
There exist a vector $u_J \in \mathbb R^d$ and a constant
$\kappa_J\ge1$, depending only on $J$, such that
\[
 |u_J|\le1,\qquad \langle u_J,z\rangle\ge0\quad(z\in J)
\]
and 
\begin{equation}\label{geometry-bound}
 |P_Jh|\le\kappa_J w
 \end{equation}
 for every $w\ge0$ and $h\in\mathbb R^d$ satisfying
$\langle h,z\rangle\ge-w$ for all $z\in J^0$, 
where $J^0=\{z\in J:\langle u_J,z\rangle=0\}$ and
 $P_J$ is the orthogonal projection to $\operatorname{span}J^0$.
\end{lemma}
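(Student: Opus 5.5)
We choose $u_J$ so that the remaining set $J^0$ is ``balanced'', meaning that no nonzero vector has nonnegative inner product with every element of $J^0$ unless it is orthogonal to all of them; the bound \eqref{geometry-bound} then follows by compactness. The construction is a Gordan-type alternative. Let $K=\operatorname{cone}J$ be the closed convex cone generated by $J$, and let $L=K\cap(-K)$ be its lineality space. We take $u_J$ in the relative interior of the dual cone $K^*=\{u:\langle u,z\rangle\ge0\ \forall z\in J\}$, normalized so that $|u_J|\le1$. If $J$ is empty, we take $u_J=0$ and $\kappa_J=1$, with $P_J=0$. The standard fact we use is that for $u$ in $\operatorname{ri}K^*$ the face $\{z\in K:\langle u,z\rangle=0\}$ equals $L$. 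This gives $J^0=J\cap L$.

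The key property to establish is that $J^0$ positively spans the subspace $V=\operatorname{span}J^0$, i.e.\ $\operatorname{cone}J^0=V$. Since $J^0\subset L$, we have $\operatorname{cone}J^0\subset L$. For the reverse inclusion, take $x\in L$. Then $-x\in K$, so $-x$ is a nonnegative combination of elements of $J$. Each $z$ appearing with a positive coefficient satisfies $\langle u_J,z\rangle=0$, because $\langle u_J,-x\rangle=0$ and all the terms $\langle u_J,z\rangle$ are nonnegative. Hence those $z$ lie in $J^0$, and the same argument applies to $x$. This shows $L=\operatorname{cone}J^0$, and in particular $V=L=\operatorname{cone}J^0$.

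The bound comes from a compactness argument. Define $\phi(h)=\min_{z\in J^0}\langle h,z\rangle$ on the unit sphere of $V$; this function is continuous. We claim $\phi(h)<0$ for every such $h$. Otherwise $\langle h,z\rangle\ge0$ for all $z\in J^0$. Because $-h\in V=\operatorname{cone}J^0$, this gives $\langle h,-h\rangle\ge0$, so $h=0$, a contradiction. Let $-\delta<0$ be the maximum of $\phi$ over the sphere, and set $\kappa_J=\max(1,1/\delta)$.

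Now let $h\in\mathbb R^d$ satisfy $\langle h,z\rangle\ge-w$ for all $z\in J^0$. Since $J^0\subset V$, we have $\langle P_Jh,z\rangle=\langle h,z\rangle$. If $P_Jh\ne0$, write $P_Jh=r e$ with $r=|P_Jh|$ and $|e|=1$, $e\in V$. Choosing $z\in J^0$ with $\langle e,z\rangle\le-\delta$ gives $-w\le r\langle e,z\rangle\le -r\delta$. Hence $r\le w/\delta\le\kappa_J w$. The case $J^0=\emptyset$ is trivial, since then $P_J=0$.

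The main obstacle is the relative-interior face identity, namely that $u\in\operatorname{ri}K^*$ annihilates exactly $L$ within $K$. This is what makes $J^0$ a balanced set; otherwise $J^0$ could still contain one-sided directions. If I want to avoid citing it, I would instead take $u_J$ directly as a normalized positive combination $\sum_i u_i$. Here, for each $z\in J$ that is not in $L$, we pick $u_z\in K^*$ with $\langle u_z,z\rangle>0$, which exists by Farkas since $-z\notin K$. This $u_J$ is strictly positive exactly on $J\setminus L$ and zero on $J\cap L$, which yields $J^0=J\cap L$ by elementary means.
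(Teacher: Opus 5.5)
Your proof is correct, but it obtains the key property of $J^0$ by a different route from the paper.

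The paper never introduces $\operatorname{cone}J$, its dual cone, or its lineality space. It chooses $u_J$ by an extremal principle: among feasible $u$ with $|u|\le1$, maximize the number of strictly positive inner products $\langle u,z\rangle$, $z\in J$. It then shows that no nonzero $v\in\operatorname{span}J^0$ satisfies $\langle v,z\rangle\ge0$ on $J^0$, because a renormalized $u_J+\varepsilon v$ would make one more inner product positive.

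You instead use convex duality (the relative-interior face identity, or your Farkas-based fallback) to identify $J^0=J\cap L$. You then prove the positive-spanning statement $\operatorname{cone}J^0=L=\operatorname{span}J^0$. By Farkas this is equivalent to the paper's property, and it reduces the sign claim to the one-line computation $\langle h,-h\rangle\ge0$. The final compactness step is the same in both.

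The paper's argument is more self-contained: it needs only finiteness of $J$ and a perturbation, with no appeal to closedness of finitely generated cones, separation, or relative interiors. Yours gives a canonical description of $J^0$ that does not depend on which $u_J\in\operatorname{ri}K^*$ is chosen. In fact, your Farkas construction shows that the maximal count in the paper's selection equals $\#(J\setminus L)$, so the paper's maximizer yields the same set $J^0=J\cap L$.
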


\begin{proof}
\emph{Step 1: choose a direction exposing as many vectors as possible.}
Among the vectors $u$ with $|u|\le1$ and $\langle u,z\rangle\ge0$ on $J$,
choose $u_J$ maximizing the number of strictly positive inner products.
The feasible set contains zero, and the possible counts belong to the finite
set $\{0,\ldots,\#J\}$, so a maximizer exists. Let
$L_J=\operatorname{span}J^0$ and let $P_J$ project orthogonally onto $L_J$.

\emph{Step 2: the remaining vectors have no one-sided direction in their span.}
Suppose $v\in L_J\setminus\{0\}$ and
$\langle v,z\rangle\ge0$ for every $z\in J^0$.
At least one of these inner products is positive: otherwise $v$ would be
orthogonal to $L_J$, and hence to itself. For sufficiently small
$\varepsilon>0$, the vector $u_J+\varepsilon v$ preserves all the positive
inner products on $J\setminus J^0$ and creates a positive inner product on
$J^0$. Here a single $\varepsilon$ works because $J$ is finite.
Dividing by $\max(1,|u_J+\varepsilon v|)$ preserves signs and restores the
norm bound. This contradicts Step 1.

\emph{Step 3: compactness gives a uniform quantitative bound.}
If $L_J=\{0\}$, take $\kappa_J=1$; then $P_J=0$ and the conclusion is
immediate. Otherwise the unit sphere of $L_J$ is compact, and Step 2 gives
\[
 \beta_J:=\min_{v\in L_J,\ |v|=1}
          \max_{z\in J^0}(-\langle v,z\rangle)>0.
\]
Indeed, the maximum is a continuous function of $v$, and it is strictly
positive at every point of that sphere. If $p=P_Jh\ne0$, apply this bound
to $v=p/|p|$. For some $z\in J^0$,
\[
 \beta_J|p|\le-\langle p,z\rangle
                  =-\langle h,z\rangle\le w.
\]
Then \eqref{geometry-bound} holds with $\kappa_J:=\max(1,\beta_J^{-1})$. The case $p=0$ is immediate.
\end{proof}

For example, if $J=\{(1,0),(-1,0),(0,1)\}$, choose $u_J=(0,1)$.
Then $P_J(h_1,h_2)=(h_1,0)$, and the constraints give $|h_1|\le w$,
although $h_2$ is unrestricted. NA will eliminate the exposed direction
from actual increments, leaving the zero face where projection applies.

\subsection{From geometry to a common wealth bound}

\begin{lemma}\label{bound}
Assume NA. There is a finite predictable process $c\ge1$, independent of
$a$ and $H$, with the following property. Set
\begin{equation}\label{R}
 R_t=\prod_{0<u\le t}(1+c_u|\Delta S_u|),\qquad
 V_t=\var(S)_t=\sum_{0<u\le t}|\Delta S_u|.
\end{equation}
For every $a>0$ and predictable $H$ with $W=a+H\cdot S\ge0$, there is a
predictable $G$ such that $G\cdot S=H\cdot S$ and
$|G_t|\le c_tW_{t-}$ for $0<t\le T$. Moreover,
\begin{equation}\label{bounds}
 0\le W_t\le aR_t,\qquad
 \var(H\cdot S)_t\le a(R_t-1),\qquad 0\le t\le T.
\end{equation}
All process identities and inequalities hold simultaneously outside a
null set, which may depend on $H$. The processes $R,V$ are finite and
$V\le R-1$.
\end{lemma}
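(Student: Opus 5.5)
The plan is to build $c$ from Lemma~\ref{geometry} applied to the set of possible next jumps. Since $S$ is piecewise constant with isolated jumps, at a jump time $t$ the increment $\Delta S_t=y-S_{t-}$ has $y\in E$. The issue is that the set of jumps that can actually occur is not determined by $S_{t-}$ alone. It depends on the information available just before $t$. I would therefore work with the predictable, $\Pp$-a.s.\ support of $\Delta S$ at the jump.

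Concretely, for the $k$-th jump time $\tau_k$, I would look at the conditional law of $\Delta S_{\tau_k}$ given $\F_{\tau_k-}$. Because $E$ is locally finite and local boundedness holds on each $\{\sigma_m\ge\tau_k\}$, this law has finite support there. That support is a random finite set $J_k\subset\mathbb R^d\setminus\{0\}$, and it is $\F_{\tau_k-}$-measurable. Since $J_k$ takes only countably many values, $u_{J_k}$ and $\kappa_{J_k}$ are measurable. A predictable version is obtained by setting $c_t=\kappa_{J_k}$ on $(\tau_{k-1},\tau_k]$, with $c_t=1$ elsewhere. I expect this measurability and predictability bookkeeping to be the main technical obstacle. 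I would first try to handle it via the standard identification of $\F_{\tau_k-}$-measurable variables with values of predictable processes at $\tau_k$.

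\textbf{Step 1: NA kills the exposed directions.} With $u=u_{J_k}$, consider the strategy holding $u$ only at the jump $\tau_k$, that is, $\one_{(\tau_{k-1},\tau_k]}u$ restricted further to times before $\tau_k$. Its gain is $\langle u,\Delta S_{\tau_k}\rangle\ge0$ a.s., and the path is $0$ until that jump, so the strategy is admissible. NA forces $\langle u,\Delta S_{\tau_k}\rangle=0$ a.s. Hence $\Delta S_{\tau_k}\in J_k^0$ a.s.

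\textbf{Step 2: replace $H$ by its projection.} Set $G_{\tau_k}=P_{J_k}H_{\tau_k}$; predictably, $G=P_{J_k}H$ on $(\tau_{k-1},\tau_k]$. Since $\Delta S_{\tau_k}\in J_k^0\subset\operatorname{span}J_k^0$, we get $\langle G,\Delta S\rangle=\langle H,\Delta S\rangle$, so $G\cdot S=H\cdot S$.

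\textbf{Step 3: the bound $|G|\le cW_-$.} By the continuation remark at the beginning of this section, $W\ge0$ throughout. For each $z\in J_k^0$ the event $\{\Delta S_{\tau_k}=z\}$ has positive conditional probability given $\F_{\tau_k-}$. On that event $W_{\tau_k}=W_{\tau_k-}+\langle H_{\tau_k},z\rangle\ge0$, so $\langle H_{\tau_k},z\rangle\ge -W_{\tau_k-}$. Passing from ``a.s.\ on the event'' to ``a.s.\ on $\Omega$'' uses that both sides are $\F_{\tau_k-}$-measurable and the event has positive conditional probability. Lemma~\ref{geometry} with $w=W_{\tau_k-}$ then gives $|G_{\tau_k}|\le\kappa_{J_k}W_{\tau_k-}$. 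Between jumps $G$ does not affect the gain, so we may set $G=0$ off the jump graphs, keeping it predictable, or just use the bound at jumps.

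\textbf{Step 4: the wealth bounds.} At each jump, $W_t\le W_{t-}+|G_t||\Delta S_t|\le W_{t-}(1+c_t|\Delta S_t|)$, and $W$ is constant between jumps. Induction over the finitely many jumps gives $W_t\le aR_t$. Similarly,
\[
 \var(H\cdot S)_t=\sum|\langle G,\Delta S\rangle|\le\sum c_uW_{u-}|\Delta S_u|\le\sum a R_{u-}c_u|\Delta S_u|=a(R_t-1),
\]
which telescopes. Finiteness of $R$ and $V$ follows from finitely many jumps per path. Finally, $V\le R-1$ because $c\ge1$ and expanding the product gives $R-1\ge\sum c_u|\Delta S_u|$.

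The null sets come from countably many jumps and countably many values of $J_k$, so they can be collected into one null set depending on $H$.
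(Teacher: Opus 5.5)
Your argument is correct in outline but takes a different route from the paper.

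\textbf{Your route.} You work with an $H$-independent object, the conditional support $J_k$ of $\Delta S_{\tau_k}$ given $\F_{\tau_k-}$. NA is applied once per jump index to $u_{J_k}$; this in fact forces $J_k=J_k^0$ a.s. The constraint $\langle H_{\tau_k},z\rangle\ge-W_{\tau_k-}$ on $J_k$ is then recovered from $W\ge0$ by your positive-conditional-probability argument. That argument is sound once it is stated on $\{z\in J_k\}$ rather than on all of $\Omega$.

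\textbf{The paper's route.} The paper lets the constraint set depend on the strategy. On $(\sigma_{m-1},\sigma_m]$ it puts $J_t=\{z\in F_m:\langle H_t,z\rangle\ge-W_{t-}\}$. Then the constraint holds on $J_t$ by definition, $\Delta S_t\in J_t$ is just $W_t\ge0$, NA is applied to the $H$-dependent strategy $u_{J_t}$, and $c=\max_{J\subseteq F_m}\kappa_J$ there.

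\textbf{What each buys.} Since the paper's $J_t$ is a finite-valued function of predictable quantities, it needs no conditional laws and no enumeration of jumps. It also avoids the representation of $\F_{\tau_k-}$-measurable variables as values of predictable processes at $\tau_k$, which your bookkeeping relies on. You could at least borrow the paper's $c$, which dominates $\kappa_{J_k}$, so that only the projector needs a predictable version. Your route gives an intrinsic, strategy-independent predictable projector, so the reduction is linear in $H$ at the jumps; the paper's reduction depends nonlinearly on the strategy. It also gives a possibly smaller $c$ tied to the actual jump support, in the spirit of Dalang--Morton--Willinger.

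\textbf{One step needs repair.} The lemma asserts $|G_t|\le c_tW_{t-}$ for all $0<t\le T$, but your Step~3 gives it only at the jump times. Setting $G=0$ off the jump graphs does not keep $G$ predictable. For totally inaccessible jumps (e.g.\ Poisson arrivals), the union of the graphs of the $\tau_k$ is not a predictable set, and a predictable process supported there vanishes at the jumps a.s. Replace $G$ by $G\one_{\{|G|\le cW_-\}}$ instead. It is predictable, agrees with $G$ at every jump outside a null set, and therefore has the same gain. In the paper this issue does not arise, because the defining constraint of $J_t$ holds at every $t$.

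Finally, $W\ge0$ is a hypothesis of the lemma, so the continuation argument you invoke is not needed.
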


\begin{proof}
Use the geometric objects of Lemma~\ref{geometry}. Only countably many
finite sets occur, so no random measurable selection theorem is needed.

\emph{Step 1: list the possible increments locally.}

Put $\sigma_0=0$, $E_m=E\cap\{x:|x|\le b_m\}$ and
$F_m=(E_m-E_m)\setminus\{0\}$. On the predictable interval
$(\sigma_{m-1},\sigma_m]$, every nonzero price increment belongs to the finite set $F_m$.
Indeed, both $S_t$ and $S_{t-}$ lie in $E_m$: the stopped bound controls
their norms, and the finite set $E_m$ is closed, so it contains the left limit.
For every subset $J\subseteq F_m$, fix the objects of Lemma~\ref{geometry}
once and for all. There are finitely many choices for each $m$, hence only
countably many choices altogether. Define on this interval
\[
 c_t=\max_{J\subseteq F_m}\kappa_J,\qquad
 J_t=\{z\in F_m:\langle H_t,z\rangle\ge-W_{t-}\},\qquad G_t=P_{J_t}H_t.
\]
\emph{Step 2: select the relevant geometric problem predictably.}
Use $c=1$ and $G=0$ on any remaining points. The intervals are predictable
because $\one_{[0,\sigma]}$ is left-continuous and adapted for every stopping
time $\sigma$. The pre-jump wealth $W_{t-}$ is predictable: it is left-continuous and
adapted, with initial value $a$. Each test
$\langle H_t,z\rangle\ge-W_{t-}$ is therefore predictable. Since $J_t$ has
finitely many possible values on each interval, selecting the previously
fixed $u_J$ and $P_J$ preserves predictability. This proves predictability
of $c$, $G$, and the auxiliary strategy $U_t=u_{J_t}$. At a price jump, nonnegative wealth gives
\[
 0\le W_t=W_{t-}+\langle H_t,\Delta S_t\rangle,
 \qquad\text{so}\qquad \Delta S_t\in J_t.
\]

\emph{Step 3: NA eliminates all exposed directions.}
Set $U=0$ elsewhere. Since $\langle U_t,\Delta S_t\rangle\ge0$ at every
price jump, its gain is nondecreasing and admissible. NA forces its terminal
gain to vanish. Every term in this finite sum is nonnegative, so each is
zero: actual increments lie in $J_t^0$, simultaneously almost surely.
By \eqref{geometry-bound},
\[
 \langle G_t,\Delta S_t\rangle=\langle H_t,\Delta S_t\rangle,
 \qquad |G_t|\le\kappa_{J_t}W_{t-}\le c_tW_{t-}.
\]
The norm bound holds at all covered times; the gain identity is needed
only at actual jumps. Summing the latter gives $G\cdot S=H\cdot S$.

\emph{Step 4: multiply the one-jump wealth bounds.}

At each jump, Cauchy--Schwarz and the replacement bound give
\[
 |\Delta(H\cdot S)_t|=|\langle G_t,\Delta S_t\rangle|
       \le c_tW_{t-}|\Delta S_t|,
 \qquad
 W_t\le W_{t-}(1+c_t|\Delta S_t|).
\]
Starting from $W_0=a$, multiply these inequalities in chronological order.
Induction gives $W\le aR$; between jumps both quantities are constant.
Also $|\Delta(H\cdot S)_t|\le aR_{t-}c_t|\Delta S_t|=a\Delta R_t$.
Summing proves \eqref{bounds}. Finitely many jumps give finiteness, and
$c\ge1$ gives $V_t\le\sum_{0<u\le t}c_u|\Delta S_u|\le R_t-1$.
\end{proof}

\section{Proof of the main theorem}

\begin{proof}[Proof of Theorem~\ref{main}]
\emph{(i) NA implies NFLVR.}
Assume NA. If $f_n\in C$
converges in $L^\infty$ to $f\ge0$, choose $\varepsilon_n\downarrow0$ with
$f_n\ge-\varepsilon_n$ and admissible gains $X^n_T\ge f_n$. The continuation
argument makes $X^n$ $\varepsilon_n$-admissible. Thus $f=0$, since
\[
 f_n\le X^n_T\le\varepsilon_n(R_T-1)\longrightarrow0\quad\as
\]

\smallskip
\noindent\emph{(ii) NA implies NFL.}

On the predictable sigma-field of $\Omega\times(0,T]$, define
\begin{equation}\label{mu}
 \mu(A)=\E\left[R_T^{-3}\int_0^T\one_Ac^{-2}\,dV\right].
\end{equation}
This is a finite measure because $c\ge1$ and $V_T\le R_T-1$. The terminal weight need
not be predictable: \eqref{mu} defines a measure on predictable sets, rather
than a predictable density process. For every $1$-admissible integrand $H$,
choose $G$ from Lemma~\ref{bound}. Then
\begin{equation}\label{L2}
 \|G\|_{L^2(\mu;\mathbb R^d)}^2
 =\E\left[R_T^{-3}\int_0^T |G|^2c^{-2}\,dV\right]
 \le\E[V_T/R_T]\le1,
\end{equation}
since $|G|\le cR_T$ at all price jumps.

Given $1$-admissible integrands $H^n$, first replace each by its corresponding
$G^n$, preserving its whole gain process. The reduction may depend on the
strategy; averaging the reduced portfolios still gives exactly the
corresponding average of the original gains. Lemma~\ref{average} yields finite
forward convex combinations $\widetilde H^n$ of the $G^n$ converging in
$L^2(\mu;\mathbb R^d)$.
Take a subsequence converging $\mu$-almost everywhere, and relabel it. Its
$n$th term is still a convex combination of $G^j$ with $j\ge n$, since
subsequence indices increase. Let $H$ be its
finite pointwise limit, set equal to zero where that limit fails to exist.
This defines a predictable process. If $A$ is the predictable exceptional set,
then
\[
 0=\mu(A)=\E\left[R_T^{-3}\int_0^T\one_Ac^{-2}\,dV\right]
 \quad\Longrightarrow\quad \int_0^T\one_A\,dV=0\quad\as
\]
Positivity of the weight therefore gives convergence at every price jump,
on a common full-probability set. Finitely many jumps on each path imply
\begin{equation}\label{uniform}
 \sup_{t\le T}|(\widetilde H^n\cdot S)_t-(H\cdot S)_t|
 \le\sum_{0<u\le T}|\widetilde H^n_u-H_u|\,|\Delta S_u|
 \longrightarrow0\quad\as
\end{equation}
Convex combinations preserve the lower bound $-1$, as does this uniform
limit. Thus $H\cdot S$ is $1$-admissible.

For $r>0$ set $C_r=C\cap\{f:\|f\|_\infty\le r\}$. We show directly that
$C_r$ is closed in $L^2(\Pp)$. Let $f_n\in C_r$ converge to $f$ in $L^2$,
and pass to an almost-surely convergent subsequence. Choose admissible gains
$X^n=H^n\cdot S$ with $X^n_T\ge f_n$. Since $f_n\ge-r$, the continuation
argument makes each $X^n$ $r$-admissible. Apply the preceding construction
to $H^n/r$ and rescale by $r$. It gives forward convex combinations of
$X^n$ converging uniformly almost surely to an $r$-admissible gain $X$.
The same convex combinations of $f_n$ converge almost surely to $f$, so
$X_T\ge f$. Also $|f|\le r$, hence $f\in C_r$.

Since $C_r$ is convex and norm closed in $L^2$, Hahn--Banach separation
makes it weakly closed there. The separating functionals
$f\mapsto\E[gf]$, $g\in L^2\subset L^1$, are also
$\sigma(L^\infty,L^1)$-continuous. Hence $C_r$ is weak-star closed in
$L^\infty$. The Krein--\v Smulian theorem \cite{KS} (see also \cite[Theorem~6]{Lurie}) states that a convex subset of a
dual Banach space is weak-star closed if its intersection with every closed
norm ball centered at the origin is weak-star closed. Applying it to $L^\infty=(L^1)^*$ proves
that $C$ is $\sigma(L^\infty,L^1)$-closed. NA therefore gives NFL,
and hence NFLVR $\Rightarrow$ NFL.

\smallskip
\noindent\emph{(iii) Construction of the martingale measure.}

Apply Theorem~\ref{KY} with $D=C$. We have already observed that $C$ is
a convex cone containing $-L^\infty_+$. NA gives
$C\cap L^\infty_+=\{0\}$, and part~(ii) proved the required weak-star
closedness. Set $d\Qq=Z\,d\Pp$; integrability,
strict positivity and normalization make $\Qq$ a countably additive
equivalent probability measure.
For a stopping time $\sigma$ such that $S^\sigma$ is bounded, $A\in\F_s$,
$s<t$, and a coordinate vector $e_i$, use the predictable strategy
\[
 H_u=e_i\one_A\one_{(s,t]}(u)\one_{[0,\sigma]}(u).
\]
Both $H\cdot S$ and $-H\cdot S$ are bounded, hence admissible. Their terminal
values belong to $C$, so separation in both signs gives
$\E_{\Qq}[\one_A(S^i_{t\wedge\sigma}-S^i_{s\wedge\sigma})]=0$.
These are exactly the martingale identities for $S^\sigma$. The localizing
sequence from Section~2 therefore makes $S$ a $\Qq$-local martingale.

\smallskip
\noindent\emph{(iv) An equivalent local martingale measure implies NA.}
Suppose $S$ is a $\Qq$-local martingale and $X=H\cdot S\ge-a$. Let $T_m$ be
the successive price-jump times, equal to $\infty$ after the last jump.
Set $\theta_m=\sigma_m\wedge T_m\wedge T$. Then $\theta_m=T$ eventually almost
surely, while $S^{\theta_m}$ is a bounded martingale with at most $m$ jumps
and total variation at most $2mb_m$.

Fix $m$, and abbreviate $\bar S=S^{\theta_m}$ and $\bar X=X^{\theta_m}=H\cdot\bar S$.
For each coordinate, the variation bound ensures that
$A\mapsto\E_{\Qq}\int_0^T\one_A\,d\bar S^i$ defines a finite signed
measure on the predictable sigma-field of $\Omega\times(0,T]$. It vanishes
on the generating rectangles
$A=B\times(s,t]$, $B\in\F_s$, by the martingale property. The monotone-class
theorem then shows $\E_{\Qq}(h\cdot\bar S)_T=0$ for every bounded predictable $h$.

For general $H$, put
\[
 \rho_n=\inf\{t\in(0,T]:\Delta\bar S_t\ne0,\ |H_t|>n\},\qquad
 Y^n=(H\one_{\{|H|\le n\}}\one_{[0,\rho_n]})\cdot\bar S,
\]
with $\inf\varnothing=\infty$. This is a stopping time: enumerate the jumps
of $\bar S$ and use that $H$ evaluated at a stopping time is measurable there.
The integrand defining $Y^n$ is bounded and predictable; in particular,
$\one_{[0,\rho_n]}$ is left-continuous and adapted. Before
$\rho_n$, $Y^n=\bar X$; at $\rho_n$ it omits that jump and stays at
$\bar X_{\rho_n-}\ge-a$. Thus $Y^n\ge-a$ and $\E_{\Qq}Y^n_T=0$.
There are finitely many price jumps, with finite values of $H$ at each, so
$Y^n=\bar X$ eventually on each path. Applying Fatou's lemma to the
nonnegative variables $a+Y^n_T$ gives
\[
 \E_{\Qq}[a+X_{\theta_m}]
 \le\liminf_{n\to\infty}\E_{\Qq}[a+Y^n_T]=a.
\]
In particular, $X_{\theta_m}$ is integrable and has nonpositive expectation.
Since $\theta_m=T$ eventually almost surely, a second application to
$a+X_{\theta_m}$ shows that $X_T$ is integrable and
$\E_{\Qq}X_T\le0$. If $X_T\ge0$, this inequality forces $X_T=0$
$\Qq$-almost surely, and hence $\Pp$-almost surely because $\Qq\sim\Pp$.
Thus NA holds. Combining this implication with parts~(i)--(iii) and
NFL $\Rightarrow$ NFLVR $\Rightarrow$ NA proves all the equivalences.
A bounded local martingale is a true martingale, giving the final assertion.
\end{proof}

\begin{remark}
Uniform separation of price vectors is sufficient but unnecessary: the
finite sets $E_m$ provide all the geometry used in the proof. For example,
$E=\{\sqrt n:n=0,1,2,\ldots\}^d$ is locally finite without a uniform separation.
The state-space condition concerns discounted prices.
\end{remark}

\begin{remark}
The local-martingale conclusion is necessary. Starting at one, let the price
double or jump to zero at deterministic times $t_n\uparrow T$, using independent
fair coin tosses and their natural filtration, and remain zero after its
first zero. Absorption occurs almost surely, so there are finitely many
jumps. Stopping upon first reaching $2^m$, or at $T$ if that level is never
reached, gives a bounded martingale and a localizing sequence.
Yet $S_T=0$ almost surely, so no equivalent measure makes $S$ a true
martingale; compare \cite{Ruf}. A pathwise finite jump count therefore differs
substantially from a deterministic finite number of periods.
\end{remark}

\appendix
\section{Functional-analytic tools}\label{app:tools}
\setcounter{theorem}{0}
\renewcommand{\thetheorem}{\thesection.\arabic{theorem}}
\renewcommand{\theHtheorem}{appendix.\thesection.\arabic{theorem}}

\subsection{Convex averaging in a Hilbert space}

\begin{lemma}[$L^2$ Koml\'os lemma, {\cite[Section~2.1]{BSV}}]\label{average}
For every bounded sequence $(u_n)$ in a Hilbert space there are
$h_n\in\conv\{u_j:j\ge n\}$ converging in norm. Each convex combination
uses finitely many terms with deterministic nonnegative coefficients summing
to one.
\end{lemma}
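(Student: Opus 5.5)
The plan is the standard Hilbert-space argument. We seek $h_n\in\conv\{u_j:j\ge n\}$ whose norms approach the infimal value of the norm over the tails; the parallelogram law then forces these $h_n$ to be Cauchy. Let $M=\sup_n|u_n|<\infty$. For each $n$ set
\[
 K_n=\conv\{u_j:j\ge n\},\qquad \alpha_n=\inf\{|h|:h\in K_n\}.
\]
Here $K_n$ is the set of finite convex combinations with deterministic nonnegative weights summing to one, exactly as required in the statement. Since $K_{n+1}\subseteq K_n$, the sequence $\alpha_n$ is nondecreasing. It is bounded by $M$, because every convex combination has norm at most $M$. Hence $\alpha_n\uparrow\alpha\le M$.

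Next I choose $h_n\in K_n$ with $|h_n|^2\le\alpha_n^2+1/n$. Such an element exists by the definition of the infimum, and it is a finite convex combination because $K_n$ is the plain convex hull and not its closure. Now take $m\ge n$. Then $h_n,h_m\in K_n$, and by convexity $(h_n+h_m)/2\in K_n$, so $|(h_n+h_m)/2|\ge\alpha_n$. The parallelogram law gives
\[
 \Bigl|\frac{h_n-h_m}{2}\Bigr|^2=\frac{|h_n|^2+|h_m|^2}{2}-\Bigl|\frac{h_n+h_m}{2}\Bigr|^2
 \le\frac{\alpha_n^2+1/n+\alpha_m^2+1/m}{2}-\alpha_n^2 .
\]
The right-hand side tends to zero as $n\to\infty$, because $\alpha_n^2$ and $\alpha_m^2$ both converge to $\alpha^2$ and $1/m\le 1/n\to0$.

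It follows that $(h_n)$ is Cauchy, and completeness of the Hilbert space gives norm convergence. The only care needed is the bookkeeping: the midpoint must lie in the smaller tail hull $K_n$, which it does because $m\ge n$ implies $K_m\subseteq K_n$. With that checked there is no genuine obstacle.
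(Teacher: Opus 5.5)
Your proof is correct and follows the paper's argument: nondecreasing infimal norms over the tail convex hulls, near-minimizers $h_n$ with slack $1/n$, and the parallelogram identity applied to a midpoint in the smaller tail hull, which gives the Cauchy property. The only differences are cosmetic: you work with $\alpha_n=\inf|h|$ instead of its square, and you could state explicitly that your bound is uniform in $m\ge n$, since $\alpha_m\le\alpha$ bounds it by $(\alpha^2-\alpha_n^2)/2+1/n$.
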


\begin{proof}
Put $D_n=\conv\{u_j:j\ge n\}$ and $m_n=\inf_{h\in D_n}\|h\|^2$.
Then $m_n\uparrow m<\infty$. Choose $h_n\in D_n$ with
$\|h_n\|^2\le m_n+1/n$. For $i,j\ge n$, the midpoint of $h_i,h_j$ lies
in $D_n$. The parallelogram identity gives
\[
 \|h_i-h_j\|^2
 =2\|h_i\|^2+2\|h_j\|^2-4\|(h_i+h_j)/2\|^2
 \le4(m-m_n)+2/i+2/j.
\]
Hence $(h_n)$ is Cauchy, and completeness gives its limit.
\end{proof}

\subsection{A strictly positive separating density}

We use the following form of the Kreps--Yan theorem
\cite{Kreps,Yan}; see also \cite[Appendix, Lemma F]{K}.

\begin{theorem}[Kreps--Yan]\label{KY}
Let $D\subset L^\infty(\Pp)$ be a convex $\sigma(L^\infty,L^1)$-closed cone
such that $-L^\infty_+\subset D$ and $D\cap L^\infty_+=\{0\}$.
Then there exists $Z\in L^1(\Pp)$ satisfying
\[
 Z>0\quad\as,\qquad \E Z=1,\qquad \E[Zf]\le0\quad\text{for every }f\in D.
\]
\end{theorem}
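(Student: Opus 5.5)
The plan is the classical exhaustion argument. It combines Hahn--Banach separation in the dual pair $(L^\infty,L^1)$ with a countable convex combination of the separating densities.

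\emph{Step 1: one density per positive set.} Fix $A\in\F$ with $\Pp(A)>0$. Then $\one_A\in L^\infty_+\setminus\{0\}$, so $\one_A\notin D$. Since $D$ is convex and $\sigma(L^\infty,L^1)$-closed, Hahn--Banach separation in the locally convex space $(L^\infty,\sigma(L^\infty,L^1))$ gives a continuous functional, hence some $g\in L^1$, and a constant $\alpha$ with $\E[gf]\le\alpha<\E[g\one_A]$ for all $f\in D$. Because $D$ is a cone containing $0$, we get $\alpha\ge0$ and $\E[gf]\le0$ on $D$: if $\E[gf]>0$ for some $f\in D$, scaling $f$ would violate the bound. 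Since $-L^\infty_+\subset D$, we have $\E[g\one_B]\ge0$ for every $B$, so $g\ge0$. Also $\E[g\one_A]>0$, so $g\ne0$. After normalizing $\E g=1$, we obtain a density $g_A\ge0$ with $\Pp(\{g_A>0\}\cap A)>0$ and $\E[g_Af]\le0$ on $D$.

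\emph{Step 2: exhaustion.} Let $\mathcal G$ be the set of all $g\in L^1_+$ with $\E g=1$ and $\E[gf]\le0$ for all $f\in D$. This set is convex and closed under countable convex combinations $\sum 2^{-k}g_k$, because the partial sums converge in $L^1$ and $f\in L^\infty$. Put $s=\sup\{\Pp(g>0):g\in\mathcal G\}$; Step 1 with $A=\Omega$ shows $\mathcal G\neq\varnothing$. Choose $g_k$ with $\Pp(g_k>0)\to s$ and set $Z=\sum_k2^{-k}g_k\in\mathcal G$. Then $\{Z>0\}=\bigcup_k\{g_k>0\}$, so $\Pp(Z>0)=s$ and the supremum is attained.

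\emph{Step 3: strict positivity.} Suppose $A=\{Z=0\}$ had positive probability. Step 1 gives $g_A\in\mathcal G$ charging $A$. Then $\tfrac12(Z+g_A)\in\mathcal G$ and its positivity set strictly exceeds $\{Z>0\}$ in probability, contradicting the maximality of $s$. Hence $Z>0$ almost surely, $\E Z=1$, and $\E[Zf]\le0$ on $D$.

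The only delicate point is Step 1: the separation must take place in the weak-star topology, so that the separating functional lies in $L^1$ rather than in $(L^\infty)^*$. This is exactly where the $\sigma(L^\infty,L^1)$-closedness hypothesis is used. The dual of $(L^\infty,\sigma(L^\infty,L^1))$ is $L^1$, so the standard separation theorem for a closed convex set and an outside point applies.
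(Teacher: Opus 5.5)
Your proposal is correct and follows essentially the same route as the paper. It uses weak-star Hahn--Banach separation of each indicator of a positive-probability set, with the cone property and $-L^\infty_+\subset D$ forcing a zero bound and a nonnegative density, then exhaustion via $\sum_k 2^{-k}g_k$, and finally the contradiction obtained from $\tfrac12(Z+g_A)$.
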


\begin{proof}
For every event $A$ of positive probability, $\one_A\notin D$.
Hahn--Banach separation in the stated weak-star topology gives $z_A\in L^1$
with $\E[z_A\one_A]>0$ and $\E[z_Af]\le0$ for $f\in D$. The cone property
makes the separating bound zero; $-L^\infty_+\subset D$ forces $z_A\ge0$.
Normalize $\E z_A=1$.

Let $\mathcal Z$ be all nonnegative normalized separating densities and put
$\alpha=\sup_{z\in\mathcal Z}\Pp(z>0)$. Choose $z_n\in\mathcal Z$ with
$\Pp(z_n>0)\to\alpha$. The $L^1$-convergent sum
$Z=\sum_{n\ge1}2^{-n}z_n$ belongs to $\mathcal Z$. Its positive set contains
each $\{z_n>0\}$, so $\Pp(Z>0)=\alpha$. If $A=\{Z=0\}$ had positive probability, the density
$(Z+z_A)/2$ would belong to $\mathcal Z$ and be positive on a set of
probability greater than $\alpha$, a contradiction.
\end{proof}

\section*{Acknowledgments}
This work was supported by JSPS KAKENHI Grant Number 25K07112.

\section*{Declaration of AI use}
OpenAI Codex was used to assist with mathematical discussion and proof
review, literature searches, exposition, and LaTeX editing. The author
retains responsibility for the mathematical content and references.

\end{document}